
\documentclass[12pt]{article}
\usepackage[authoryear]{natbib}
\usepackage{amssymb}
\usepackage{amsfonts}
\usepackage{amsmath}
\usepackage{dsfont}
\usepackage{soul}
\usepackage[nohead]{geometry}
\usepackage{graphicx}
\usepackage{amsthm}
\usepackage{color}
\usepackage{comment}
\usepackage{setspace}
\usepackage{framed}
\usepackage{enumitem}
\usepackage{todonotes}
\usepackage{tikz}
\usepackage{rotating}
\usepackage{subfigure}
\usepackage[flushleft]{threeparttable}
\usepackage{multirow}
\usepackage[colorlinks=true,citecolor=blue,linkcolor=blue]{hyperref}
\usepackage{xr}
\usepackage{bm}
\usepackage{xcolor}
\usepackage{booktabs}
\usepackage{lscape}
\usepackage{algorithm}
\usepackage{algorithmic}
\usepackage[toc,title,page]{appendix}
\usepackage{lscape}
\usepackage{pdflscape}

\usepackage{natbib}

\renewcommand{\today}{\ifcase \month \or January\or February\or March\or %
	April\or May\or June\or July\or August\or September\or October\or November\or %
	December\fi, \number \year} 

\usepackage{caption} 
\captionsetup[table]{skip=9pt}
\allowdisplaybreaks

\setcounter{MaxMatrixCols}7

\newcommand{\tcr}{\textcolor{red}}
\newcommand{\eps}[0]{\ensuremath{\varepsilon}}

\newcommand{\bb}{\mbox{\bf b}}

\newcommand{\bx}{\mbox{\bf x}}

\newcommand{\bg}{\mbox{\bf g}}
\newcommand{\bA}{\mbox{\bf A}}
\newcommand{\ba}{\mbox{\bf a}}

\newcommand{\bB}{\mbox{\bf B}}

\newcommand{\bG}{\mbox{\bf G}}
\newcommand{\bH}{\mbox{\bf H}}

\newcommand{\bP}{\mbox{\bf P}}

\newcommand{\bW}{\mbox{\bf W}}

\newcommand{\bV}{\mbox{\bf V}}
\newcommand{\bQ}{\mbox{\bf Q}}
\newcommand{\bR}{\mbox{\bf R}}

\newcommand{\bU}{\mbox{\bf U}}
\newcommand{\bX}{\mbox{\bf X}}

\newcommand{\beps}{\mbox{\boldmath $\varepsilon$}}

\newcommand{\bLambda}{\mbox{\boldmath $\Lambda$}}

\newcommand{\bSigma}{\mbox{\boldmath $\Sigma$}}
\newcommand{\bOmega}{\mbox{\boldmath $\Omega$}}

\newcommand{\bPhi}{\mbox{\boldmath $\Phi$}}
\newcommand{\bPsi}{\mbox{\boldmath $\Psi$}}

\newcommand{\cov}{\mathrm{cov}}

\newcommand{\tr}{\mathrm{tr}}
\newcommand{\diag}{\mathrm{diag}}

\newcommand{\bw}{\mbox{\bf w}}
\newcommand{\var}{\mathrm{var}}
\newcommand{\beq}{\begin{eqnarray*}}
\newcommand{\eeq}{\end{eqnarray*}}

\setcounter{tocdepth}{3}

\usepackage{graphicx,rotating,booktabs,threeparttable}

\usepackage{adjustbox}
\usepackage{array}

\newcolumntype{R}[2]{%
	>{\adjustbox{angle=#1,lap=\width-(#2)}\bgroup}%
	l%
	<{\egroup}%
}

\newtheorem{thm}{Theorem}[section]

\newtheorem{lem}{Lemma}[section]

\newtheorem{assum}{Assumption}[section]
\newtheorem{pro}{Proposition}[section]
\numberwithin{equation}{section}
\theoremstyle{definition}

\newtheorem{remark}{Remark}[section]
\makeatletter
\def\@biblabel#1{\hspace*{-\labelsep}}
\makeatother
\geometry{left=1in,right=1in,top=1in,bottom=1in}

\DeclareMathOperator*{\argmin}{argmin}


\numberwithin{equation}{section}

\renewcommand{\hat}{\widehat}

\renewcommand{\hat}{\widehat}

\newcommand{\bfm}[1]{\ensuremath{\mathbf{#1}}}

\def\ba{\bfm a}   \def\bA{\bfm A}  
\def\bb{\bfm b}   \def\bB{\bfm B}  
\def\bc{\bfm c}

    \def\FF{\mathbb{F}}
\def\bg{\bfm g}   \def\bG{\bfm G}  
   \def\bH{\bfm H}

   \def\bP{\bfm P}  
   \def\bQ{\bfm Q}  
   \def\bR{\bfm R}

   \def\bU{\bfm U}  
   \def\bV{\bfm V}  
\def\bw{\bfm w}   \def\bW{\bfm W}  
\def\bx{\bfm x}   \def\bX{\bfm X}

 \def\cF{{\cal  F}}

\def\calX{{\cal  X}}

\newcommand{\bfsym}[1]{\ensuremath{\boldsymbol{#1}}}

            \def\bPsi  {\bfsym {\Theta}}
 \def\beps{\bfsym \varepsilon}          
              \def\bSigma{\bfsym \Sigma}
         \def\bLambda {\bfsym {\Lambda}}
           \def\bOmega {\bfsym {\Omega}}

 \def\bPsi{\bfsym {\Psi}}

 \def \bPhi {\bfsym \Phi}


\def\1{\bfsym{1}}	





\def\eps{\varepsilon}
\def\beps{\mbox{\boldmath$\eps$}}
\def\newpage{\vfill\eject}

\def\var{\mbox{var}}

\def\today{\ifcase\month\or
  January\or February\or March\or April\or May\or June\or
  July\or August\or September\or October\or November\or December\fi
  \space\number\day, \number\year}

\newdimen\biblioindent    \biblioindent=30pt

 at 8truept

\def\sign{\mbox{sign}}

\def\eps{\varepsilon}

\newcommand{\beqn}{\begin{eqnarray}}
  \newcommand{\eeqn}{\end{eqnarray}}
\newcommand{\beqnn}{\begin{eqnarray*}}
  \newcommand{\eeqnn}{\end{eqnarray*}}

\allowdisplaybreaks
\setcounter{section}{0}

\setcounter{page}{1}
\usepackage{verbatim}
\pagestyle{plain}




\def\tilde{\widetilde}

\def\FF{\mathcal{F}}
\def\[{\left [}  \def\]{\right ]} \def\({\left (}  \def\){\right )}
 
\def\hat{\widehat}

\theoremstyle{definition}

\def \diag {\mathrm{diag}} \def \Diag {\mathrm{Diag}}

\externaldocument{supplement}

\begin{document}
	\bibliographystyle{ecta}
	
	\title{Matrix-based Prediction Approach for Intraday Instantaneous Volatility Vector}

	\date{\today}

	\author{
		Sung Hoon Choi\thanks{%
			Department of Economics, University of Connecticut, Storrs, CT 06269, USA. 
			E-mail: \texttt{sung\_hoon.choi@uconn.edu}.} \\ 
		\and Donggyu Kim\thanks{Department of Economics, University of California, Riverside, CA 92521, USA.
			Email: \texttt{donggyu.kim@ucr.edu}.}\\ 
		}
        
	\maketitle
	\pagenumbering{arabic}	
\begin{abstract}
		\onehalfspacing
        In this paper, we introduce a novel method for predicting intraday instantaneous volatility based on Itô semimartingale models using high-frequency financial data. 
        Several studies have highlighted stylized volatility time series features, such as interday auto-regressive dynamics and the intraday U-shaped pattern. 
        To accommodate these volatility features, we propose an interday-by-intraday instantaneous volatility matrix process that can be decomposed into low-rank conditional expected instantaneous volatility and noise matrices. 
        To predict the low-rank conditional expected instantaneous volatility matrix, we propose the Two-sIde Projected-PCA (TIP-PCA) procedure. 
        We establish asymptotic properties of the proposed estimators and conduct a simulation study to assess the finite sample performance of the proposed prediction method. 
        Finally, we apply the TIP-PCA method to an out-of-sample instantaneous volatility vector prediction study using high-frequency data from the S\&P 500 index and 11 sector index funds.
		\\
  
\noindent \textbf{Key words:} 
		Diffusion process, high-frequency financial data, low-rank matrix, semiparametric factor models.
\end{abstract}

\newpage
	
\doublespacing

\section{Introduction}

The analysis of volatility is a vibrant research area in financial econometrics and statistics.
In practice, it is crucial to investigate the volatility dynamics of asset returns for hedging, option pricing, risk management, and portfolio management. 
With the wide availability of high-frequency financial data, several well-performing non-parametric integrated volatility estimation methods have been developed. 
Examples include two-time scale realized volatility (TSRV) \citep{zhang2005tale}, multi-scale realized volatility (MSRV) \citep{zhang2006efficient, zhang2011estimating}, pre-averaging realized volatility (PRV) \citep{christensen2010pre, jacod2009microstructure}, wavelet realized volatility (WRV) \citep{fan2007multi},  kernel realized volatility (KRV) \citep{barndorff2008designing, barndorff2011multivariate},
quasi-maximum likelihood estimator (QMLE) \citep{ait2010high, xiu2010quasi},  local method of moments \citep{bibinger2014estimating},  and robust pre-averaging realized volatility \citep{fan2018robust, shin2023adaptive}.
This incorporation of high-frequency information enhances our understanding of low-frequency (i.e., interday) market dynamics, and several conditional volatility models have been developed based on the realized volatility to explain market dynamics.
Examples include realized volatility-based modeling approaches \citep{andersen2003modeling},
heterogeneous auto-regressive (HAR) models \citep{corsi2009simple}, high-frequency-based volatility (HEAVY) models \citep{shephard2010realising}, realized GARCH models \citep{hansen2012realized}, and unified GARCH-It\^o models \citep{kim2016unified, song2021volatility}.

To understand intraday dynamics, several non-parametric instantaneous (or spot) volatility estimation procedures have been developed \citep{fan2008spot,figueroa2022kernel, foster1996continuous, kristensen2010nonparametric, mancini2015spot, todorov2019nonparametric, todorov2023bias, zu2014estimating}.
With these well-performing instantaneous volatility estimators, several studies have found that intraday instantaneous volatility has U-shaped patterns \citep{admati1988theory, andersen1997intraday, andersen2019time, hong2000trading, li2023robust}.
On the other hand, forecasting intraday volatility has not received as much attention in the literature compared to daily volatility forecasting.
\cite{engle2012forecasting} developed a modified GARCH model that imposes the conditional variance as a product of daily, diurnal, and stochastic intraday components for intraday volatility forecasting.
Recently, \cite{zhang2024volatility} studied several non-parametric machine learning methods to forecast intraday realized volatility by utilizing commonality in intraday volatility.
From the aforementioned studies, it is evident that interday volatility dynamics can be explained by autoregressive-type time series dynamics while intraday volatility dynamics have some periodic patterns, such as a U-shape.
Thus, to predict the one-day-ahead whole intraday instantaneous volatility vector, we need to consider the interday and intraday dynamics simultaneously.
Furthermore, since we consider the prediction of whole intraday instantaneous volatilities, we need to handle the overparameterization issue.

This paper introduces a novel approach for predicting the one-day-ahead instantaneous volatility process.
Specifically, we represent the instantaneous volatility process in a matrix form.
For example, to account for interday time series dynamics and intraday periodic patterns, each row corresponds to a day, and each column represents a high-frequency sequence within that day.
Thus, we have an interday-by-intraday instantaneous volatility matrix. 
To handle the overparameterization issue, we impose a low-rank plus noise structure on the instantaneous volatility matrix, where the low-rank components represent a conditional expected instantaneous volatility matrix and have semiparametric factor structures. 
To accommodate the proposed instantaneous volatility model, we adopt the Projected-PCA method \citep{fan2016projected} to estimate left-singular and right-singular vector components with instantaneous volatility matrix estimators. 
For example, for the left-singular vector, we project the left-singular vectors onto a linear space spanned by past realized volatility estimators, which enables us to account for the interday time-series dynamics and to predict the one-day-ahead instantaneous volatility vector with observed current realized volatility estimators.
Conversely, for the right-singular vector, to explain periodic patterns,   we project the right-singular vectors onto a linear space spanned by deterministic time sequences. 
This two-side projection enables us to explain the interday and intraday dynamics simultaneously. 
 We call this the Two-sIde Projected-PCA (TIP-PCA) procedure. 
We then derive convergence rates for the projected instantaneous volatility matrix estimator and the predicted instantaneous volatility vector using the TIP-PCA method.
As an empirical study, we apply the proposed TIP-PCA estimator on out-of-sample predictions for the one-day-ahead instantaneous volatility process using high-frequency trading data. 
Using a rolling window scheme, we compare TIP-PCA with several methods. 
TIP-PCA demonstrates superior performance by leveraging both interday and intraday dynamics. 
Additionally, we evaluate one-day-ahead 10-minute frequency Value at Risk (VaR) predictions, where TIP-PCA consistently outperforms other methods in backtests.

The remainder of the paper is structured as follows. 
Section \ref{setup} establishes the model and introduces the TIP-PCA prediction procedure. 
Section \ref{asymp} provides an asymptotic analysis of the TIP-PCA estimators. 
The effectiveness of the proposed method is demonstrated through a simulation study in Section \ref{simulation} and by applying it to real high-frequency financial data for predicting the one-day-ahead instantaneous volatility process in Section \ref{empiric}. 
Section \ref{conclusion} concludes the study. 
All proofs are presented in the online supplementary file.

\section{Model Setup and Estimation Procedure} \label{setup}
Throughout this paper, we denote by $\|\bA\|_{F}$, $\|\bA\|_{2}$ (or $\|\bA\|$ for short), $\|\bA\|_{1}$, $\|\bA\|_{\infty}$, and $\|\bA\|_{\max}$ the Frobenius norm,  operator norm, $l_{1}$-norm, $l_{\infty}$-norm, and elementwise norm, which are defined, respectively, as $\|\bA\|_{F} = \tr^{1/2}(\bA'\bA)$, $\|\bA\|_{2} = \lambda_{\max}^{1/2}(\bA'\bA)$, $\|\bA\|_{1} = \max_{j}\sum_{i}|a_{ij}|$, $\|\bA\|_{\infty} = \max_{i}\sum_{j}|a_{ij}|$, and $\|\bA\|_{\max} = \max_{i,j}|a_{ij}|$. 
When $\bA$ is a vector, the maximum norm is denoted as $\|\bA\|_{\infty}=\max_{i}|a_{i}|$, and both $\|\bA\|$ and $\|\bA\|_{F}$ are equal to the Euclidean norm. 
   
\subsection{A Model Setup} \label{model}
We consider the following jump diffusion process: for the $i$-th day and intraday time $t \in [0,1],$
\begin{equation}\label{diffusion-def}
	dZ_{i,t}= \mu_{i,t} dt + \sigma_{i,t}  dB_{i,t} + J_{i,t} d P_{i,t}, 
\end{equation}
where $Z_{i,t}$ is the log price of an asset, $\mu_{i,t}$ is a drift process,  $B_{i,t}$ is a one-dimensional standard Brownian motion,  $J_{i,t}$ is the jump size, and $P_{i,t}$ is the Poisson process with the intensity $\mu_{J}$.
For a given intraday time sequence, for each $i =1,\dots, D$ and $j=1,\dots, n$, we denote the instantaneous volatility process as $c_{i,j} := \sigma_{i,t_{j}}^2$,  where $0< t_{1} < \cdots < t_{n} =1$.
Then, we can write the discrete-time instantaneous volatility process as follows: 
\begin{equation}\label{Sigma}
	\bSigma_{D, n} = (c_{i,j}) _{D \times n}  = \bU \bLambda \bV^{\prime} + \bSigma_{\eps},
\end{equation}
where $\bU=  (u_{i,k} ) _{i=1,\ldots, D, k=1,\ldots,r}$ is the left singular vector matrix, $\bV = (v_{j,k}) _{j=1, \ldots, n, k=1,\ldots, r}$ is the right singular vector matrix, $\bLambda= \Diag (\lambda_1, \ldots, \lambda_r)$ is the singular value matrix, and $\bSigma_{\eps} = (\eps_{i,j})_{D\times n}$ is the random noise matrix.  
We note that $\bSigma_{D, n}$ is a $D \times n$ matrix, which is distinct from a covariance matrix and contains only positive elements.
The left singular vector matrix represents interday volatility dynamics, while the right singular vector matrix explains intraday volatility dynamics.
For example, we consider $r=1$. 
For the intraday, the U-shaped instantaneous volatility pattern is often observed in empirical data and supported by the financial market \citep{admati1988theory, andersen1997intraday, andersen2019time, hong2000trading}. 
Thus, we can use a U-shape function with respect to time $t$ for $\bV$, for example,  $v_{t,1} = a_1 (t - a_2)^2 +a_3$. 
In contrast, for the interday, the daily dynamics are often explained by past realized volatilities \citep{corsi2009simple, hansen2012realized,  kim2019factor, kim2016unified, shephard2010realising, song2021volatility}.
 To reflect this,  $u_{i,1}$ is a function of past realized volatilities, such as the HAR model \citep{corsi2009simple}, for example, $u_{i,1} = b_0 + b_1 RV_{i-1} + b_2  \frac{1}{5}\sum_{j=1}^5 RV_{i-j} +  b_3  \frac{1}{22}\sum_{j=1}^{22} RV_{i-j}$, where $RV_i$ is the $i$-th day realized volatility. 
The features mentioned above motivate the representation of the model (\ref{Sigma}).
We note that in this paper, to simplify prediction modeling, we consider the time series structure of interday volatility dynamics and the instantaneous periodic pattern of intraday volatility dynamics, as described in the example above. 
Our empirical experiment supports that the simplified structure performs well.
However, we can impose more complicated structures on both dimensions.
That is, the above example is one of the possible models, and we propose a generalized model discussed below.

In this paper, our goal is to predict the instantaneous volatility process for the next day.
In general, we assume that $\lambda_1, \ldots, \lambda_r$ are latent singular values,  $u_{i,k}$ is $\mathcal{F}_{i-1}$-adapted, and $v_{t,k}$ is a function of time $t$. 
 Thus,  given $\mathcal{F}_D$, we can predict the instantaneous volatility as follows:
 \begin{equation} \label{conditional exp}
 	E\left[\sigma_{D+1,t}^2 | \FF_{D}\right] =  \sum_{k=1}^r \lambda_k u_{D+1,k} v_{t,k} \text{ a.s.}
 \end{equation}
Given  (\ref{Sigma}), we impose the following nonparametric structure on both singular vectors: for each $k \leq r$, $i \leq D$, and $j \leq n$,
  \begin{align}
  	u_{i,k} = g_{k}(\bx_{i}), \qquad v_{j,k} = h_{k}(\bw_{j}),
  \end{align}
  where $\bx_{i} = (x_{i1},\dots,x_{id_{1}})$ and $\bw_{j} = (w_{j1},\dots,w_{jd_{2}})$ are observable covariates that explain the left and right singular vectors, respectively.
In this context, $\bx_{i}$ can be the past realized volatility of yesterday, last week, and last month in the HAR model, while $\bw_{j}$ can be the intraday time sequence.
Furthermore, we assume that each unknown nonparametric function is additive as follows: for each $k\leq r$, $i\leq D$, and $j\leq n$,
 \begin{align*}
  	  &g_{k}(\bx_{i}) = \phi(\bx_{i})'\bb_{k} + R_{k}(\bx_{i}),\\
      &h_{k}(\bw_{j}) = \psi(\bw_{j})'\ba_{k} + Q_{k}(\bw_{j}),
  \end{align*}
  where $\phi(\bx_{i})$ is a $(J_{1}d_{1})\times 1$ vector of basis functions, $\bb_{k}$ is a $(J_{1}d_{1})\times 1$ vector of sieve coefficients, and $R_{k}(\bx_{i})$ is the approximation error term; $\psi(\bw_{j})$ is a $(J_{2}d_{2})\times 1$ vector of basis functions, $\ba_{k}$ is a $(J_{2}d_{2})\times 1$ vector of sieve coefficients, and $Q_{k}(\bw_{j})$ is the approximation error term.
  Equivalently, we can write $g_{k}(\bx_{i}) = \sum_{d=1}^{d_{1}}g_{kd}(x_{id})$ and $h_{k}(\bw_{j}) = \sum_{d=1}^{d_{2}}h_{kd}(w_{jd})$, where $g_{kd}(x_{id}) = \sum_{l=1}^{J_{1}}b_{l,kd}\phi_{l}(x_{id})+R_{kd}(x_{id})$ and $h_{kd}(w_{jd}) = \sum_{l=1}^{J_{2}}a_{l,kd}\psi_{l}(w_{id})+Q_{kd}(w_{jd})$, respectively.
  Hence, each additive component of $g_{k}$ and $h_{k}$ can be estimated by the sieve method.
  Throughout the paper, we assume that $d_{1}=\dim(\bx_{i})$, $d_{2} = \dim(\bw_{j})$ and $r$ are fixed.
  The number of sieve terms, $J_{1}$ and $J_{2}$, grow very slowly as $D \rightarrow \infty$ and $n \rightarrow \infty$, respectively.
    In a matrix form, we can write
  \begin{align*}
  	&\bU := \bG(\bX) = \bPhi(\bX)\bB + \bR(\bX),\\
  	&\bV := \bH(\bW) = \bPsi(\bW)\bA + \bQ(\bW),
  \end{align*}
  where the $D \times (J_{1}d_{1})$ matrix $\bPhi(\bX) = (\phi(\bx_{1}), \dots, \phi(\bx_{D}))'$, the $(J_{1}d_{1})\times r$ matrix $\bB = (\bb_{1},\dots, \bb_{r})$, and $\bR(\bX) = (R_{k}(\bx_{i}))_{D\times r}$; the $n \times (J_{2}d_{2})$ matrix $\bPsi(\bW) = (\psi(\bw_{1}), \dots, \psi(\bw_{n}))'$, the $(J_{2}d_{2})\times r$ matrix $\bA = (\ba_{1},\dots, \ba_{r})$, and $\bQ(\bW) = (Q_{k}(\bw_{j}))_{n\times r}$.
 We note that since $\bG(\bX)$ and $\bH(\bW)$ are identifiable up to the sign and $\bPhi(\bX)$ and $ \mathbf{\Psi} (\bW)$ are given, the coefficients $\bA$ and $\bB$ can be defined by the definition of linear regression coefficients.
    Then, since the vectors of  basis functions are non-singular, we can define uniquely $\bPhi(\bX)\bB$ and $ \mathbf{\Psi}(\bW)\bA$.

 Due to the imperfections of the trading mechanisms \citep{ait2009high}, the true underlined log-stock price $Z_{i,t}$ in \eqref{diffusion-def} is not observable. 
 To reflect the imperfections, we assume that the high-frequency intraday observations $X_{i,t_s}, s=1, \ldots, m,$ are contaminated by microstructure noises as follows:
	\begin{equation} 	\label{def-obervation}
	Y _{i,t_s} = Z_{i,t_s} + e_{i,t_s}, \quad i=1,\ldots,D, s=1,\ldots,m,
	\end{equation}
where the microstructure noises $e_{i,t_s}$ are independent random variables with a mean of zero and a variance of $\eta_{ii}$. 
For simplicity, we assume that the observed time points are equally spaced, that is, 
$t_{s}- t_{s-1} = m ^{-1}$ for $i=1,\ldots, D$ and $s=2,\ldots, m$.

Several non-parametric instantaneous volatility estimation procedures have been developed \citep{fan2008spot,figueroa2022kernel, foster1996continuous, kristensen2010nonparametric, mancini2015spot, todorov2019nonparametric, todorov2023bias, zu2014estimating}.
We can use any well-performing instantaneous volatility estimator that satisfies Assumption \ref{assum_spotvol} (ii). 
In the numerical study, we employ the jump robust pre-averaging method proposed by \cite{figueroa2022kernel}. 
The specific method is described in \eqref{pre-spot}.

\subsection{Two-sIde Projected-PCA} \label{estimation procedure}

To accommodate the semiparametric structure in Section \ref{model}, we need to project the left and right singular vectors on linear spaces spanned by the corresponding covariates.
To do this, we apply the Projected-PCA \citep{fan2016projected} procedure to the left and right singular vectors with the well-performing instantaneous volatility estimator. 
The specific procedure is as follows:
\begin{enumerate}
  	\item For each $i \leq D$ and $j \leq n$, we estimate the instantaneous volatility, $c_{i,j} = \sigma_{i,t_{j}}^2$, using high-frequency log-price observations and denote them $\hat{c}_{i,j}$. 
    Let $\hat{\bLambda} = \diag(\hat{\lambda}_{1}, \dots, \hat{\lambda}_{r}),$ where $\hat{\lambda}_{1}\geq \hat{\lambda}_{2}\geq \cdots \geq \hat{\lambda}_{r}$ are the square root of the leading eigenvalues of $\hat{\bSigma}_{D,n}\hat{\bSigma}_{D,n}^{\prime}$, where $\hat{\bSigma}_{D,n} = (\hat{c}_{i,j})_{D \times n}$.
    
  	\item Define the $D\times D$ projection matrix as $\bP_{\Phi} = \bPhi(\bX)(\bPhi(\bX)'\bPhi(\bX))^{-1}\bPhi(\bX)'$.
  	The columns of $\hat{\bG}(\bX):= (\hat{U}_{1},\dots, \hat{U}_{r})$ are defined as the $r$ leading eigenvectors of the $D\times D$ matrix $\bP_{\Phi}\hat{\bSigma}_{D,n}\hat{\bSigma}_{D,n}^{\prime}\bP_{\Phi}$.
  	Then, we can estimate $\bB$ by
  	\begin{align*}
  	\hat{\bB} = (\hat{\bb}_{1},\dots,\hat{\bb}_{r}) = (\bPhi(\bX)'\bPhi(\bX))^{-1}\bPhi(\bX)'\hat{\bG}(\bX).
  	\end{align*}
   Given any $\bx \in \calX$, we estimate $g_{k}(\cdot)$ by
   $$
   \hat{g}_{k}(\bx) = \phi(\bx)'\hat{\bb}_{k}  \qquad \text{for} \quad k = 1,\dots, r,
   $$
   where $\calX$ denotes the support of $\bx_{i}$.
   
   \item Define the $n\times n$ projection matrix as $\bP_{\Psi} = \bPsi(\bW)(\bPsi(\bW)'\bPsi(\bW))^{-1}\bPsi(\bW)'$.
  	The columns of $\ddot{\bH}(\bW) := (\hat{V}_{1},\dots, \hat{V}_{r})$ are defined as the $r$ leading eigenvectors of the $n\times n$ matrix $\bP_{\Psi}\hat{\bSigma}_{D,n}^{\prime}\hat{\bSigma}_{D,n}\bP_{\Psi}$.
    
    \item We estimate a sign vector $s_{0} = (s_{01},\dots,s_{0r})\in \{-1,1\}^{r}$ defined in \eqref{s0_def} by
    \begin{equation}\label{sign}
    \hat{s} := (\hat{s}_{1},\dots,\hat{s}_{r}) = \argmin_{s\in\{-1,1\}^{r}}\left\|\sum_{k=1}^{r}s_{k}\hat{\lambda}_{k}\hat{U}_{k}\hat{V}_{k}^{\prime}-\hat{\bSigma}_{D,n}\right\|_{F}^{2}.
   \end{equation}
    Then, we update the right singular vector estimator by
    $\hat{\bH}(\bW) = (\hat{s}_{1}\hat{V}_{1},\dots, \hat{s}_{r}\hat{V}_{r})$.
   
   \item Finally, we predict the conditional expectation of the one-day-ahead instantaneous volatility vector $E[(c_{D+1,1},\dots,c_{D+1,n}) | \FF_{D}]$ by
   $$
   (\tilde{c}_{D+1,1},\dots, \tilde{c}_{D+1,n}) = \hat{\bg}(\bx_{D+1})\hat{\bLambda}\hat{\bH}(\bW)^{\prime},
   $$
   where $\hat{\bg}(\bx_{D+1}) = (\hat{g}_{1}(\bx_{D+1}),\dots, \hat{g}_{r}(\bx_{D+1}))$.
   
  \end{enumerate}

\begin{remark}
    To estimate the instantaneous volatility matrix, we need to match the signs for singular vector estimators \citep{cho2017asymptotic}.
    This is because $\hat{U}_{k}$ and $\hat{V}_{k}$ can estimate $U_{k} = (u_{1,k},\dots,u_{D,k})'$ and $V_{k} = (v_{1,k},\dots,v_{D,k})'$ up to signs, such as $\sign(\langle\hat{U}_{k},U_{k}\rangle)$ and $\sign(\langle\hat{V}_{k},V_{k}\rangle)$, respectively.
    Let $s_{0} =(s_{01},\dots,s_{0r})\in \{-1,1\}^{r}$ be
    \begin{equation} \label{s0_def}
    s_{0k} = \sign(\langle\hat{U}_{k},U_{k}\rangle) \sign(\langle\hat{V}_{k},V_{k}\rangle)  \qquad \text{for} \quad k = 1,\dots, r.
    \end{equation}
    Then, $\bU\bLambda\bV^{\prime}$ can be consistently estimated by $\sum_{k=1}^{r}s_{0k}\hat{\lambda}_{k}\hat{U}_{k}\hat{V}_{k}^{\prime}$.
    However, since $s_{0}$ is unknown in practice, we employ the sign estimation procedure as discussed in Step \ref{sign} above.
    We can show their sign consistency under a regularity condition (see Assumption \ref{assum_sign}).
\end{remark}
 
In summary, given the estimated instantaneous volatility matrix, we initially estimate the singular values using the conventional PCA method.
We employ the Projected-PCA method \citep{fan2016projected} to estimate the unknown nonparametric function using observable covariates (e.g., a series of past realized volatilities). 
We then apply the Projected-PCA method again to estimate the right singular vector matrix using the covariate (e.g., intraday time sequence).
Finally, with the observable covariates $\bx_{D+1}$ (i.e., information about past realized volatilities on the $D$th day), we predict the one-day-ahead instantaneous volatility process by multiplying the estimated singular value and vector components.
We refer to this procedure as the Two-sIde Projected-PCA (TIP-PCA). 
The TIP-PCA method can accurately predict instantaneous volatility by incorporating both interday and intraday dynamics, as the projection approach removes noise components.
Moreover, since we assume a low-rank matrix, we can reduce the complexity of the model, which helps overcome the overparameterization.  
The numerical study in Sections \ref{simulation} and \ref{empiric} demonstrates that TIP-PCA performs well in predicting a one-day-ahead instantaneous volatility vector.

\subsection{Choice of Tuning Parameters} \label{tuning}

The suggested TIP-PCA estimator requires the choice of tuning parameters $r$, $J_{1}$, and $J_{2}$. 
First, the number of latent factors, $r$, can be chosen through data-driven methods \citep{ahn2013eigenvalue, bai2002determining, onatski2010determining}.
For example, $r$ can be determined by finding the largest singular value gap or singular value ratio such that $\max_{k\leq r_{\max}}(\hat{\lambda}_{k}-\hat{\lambda}_{k+1})$ and $\max_{k\leq r_{\max}}\frac{\hat{\lambda}_{k}}{\hat{\lambda}_{k+1}}$ for a predetermined maximum number of factors $r_{\max}$.
For the numerical studies in Sections \ref{simulation} and \ref{empiric}, we employed rank 1 using the eigenvalue ratio method proposed by \cite{ahn2013eigenvalue}.

The numbers of sieve terms, $J_{1}$ and $J_{2}$, and basis functions can be flexibly chosen by practitioners based on the conjecture of the nonparametric function form \citep{chen2020semiparametric, fan2016projected}.
We note that $J_{1}$ and $J_{2}$ are the cost to approximate the unknown nonparametric functions $g_{k}(\cdot)$ and $h_{k}(\cdot)$ (see Remark \ref{remark1}).
Both $J_{1}$ and $J_{2}$ grow very slowly such as $\log D$ and $\log n$, respectively, and we require $J_{1}=o(\sqrt{D})$ and $J_{2} = o(\sqrt{n})$ as described in Proposition \ref{element wise norm} and Theorem \ref{main_thm}.
In this context, the interday volatility dynamic is a linear function of the past realized volatilities, while the intraday volatility dynamic is a U-shaped function with respect to deterministic time sequences. 
Therefore, for the numerical studies in Sections \ref{simulation} and \ref{empiric}, we employed the additive polynomial basis with the sieve dimensions $J_{1}=2$ and $J_{2}=3$ for the TIP-PCA method.

\section{Asymptotic Properties} \label{asymp}
In this section, we establish the asymptotic properties of the proposed TIP-PCA estimator.
To do this, we impose the following technical assumptions.

  \begin{assum} \label{assum_spotvol} ~
      \begin{itemize}
            \item[(i)] For $k \leq r$, the eigengap satisfies $|\lambda_{k+1} - \lambda_{k}| = O_{P}(\sqrt{nD})$ and $\lambda_{r+1} = 0$.
            \item[(ii)] For some bounded $\mu_{1}$ and $\mu_{2}$, the left and right singular vectors satisfy: 
            $$
            \max_{i\leq D, k\leq r}|u_{i,k}|^2 \leq \frac{r\mu_1}{D}, \text{ and } \max_{j\leq n, k\leq r}|v_{j,k}|^2 \leq \frac{r\mu_2}{n}.
            $$
            \item[(iii)] 
              
            For each $i\leq D$ and $j\leq n$ and, the estimated instantaneous volatility $\hat{c}_{i,j}$ satisfies
            $$
             \hat{c}_{i,j}-c_{i,j}  = \upsilon_{i,j} + \varsigma_{i,j},
            $$
            where $\upsilon_{i,j}$ follows the martingale difference sequence and $\varsigma_{i,j}$ is the estimation bias term such that $E(\upsilon_{i,j} | \cF_{i,t_j})=0$ a.s.,  $E(\upsilon_{i,j}^{6}) = O(m^{-3/4})$,   $E(\varsigma_{i,j}^{6}) = O(\rho_{m}^{6})$, and for any $s \leq D$ and $j'\leq n$,  $E(  c_{i,j'}\upsilon_{i,j} | \cF_{i,t_j}) = 0$ and $E( c_{s,j} c_{s,j'}\upsilon_{i,j} | \cF_{i,t_j}) = 0$ a.s. Additionally, there exist a constant $C_{1}>0$ such that $E(c_{i,j}^{4}) < C_{1}$.

        \end{itemize}
  \end{assum}
  
Assumption \ref{assum_spotvol} is related to assumptions for the instantaneous volatility matrix.
Assumption \ref{assum_spotvol}(i) is the eigengap assumption, which is essential for analyzing low-rank matrices \citep{candes2010matrix, cho2017asymptotic, fan2018large}.
We note that since we have a $D\times n$ instantaneous volatility matrix, the pervasive condition implies that the eigenvalue for the low-rank component has $\sqrt{nD}$ order. 
Assumption \ref{assum_spotvol}(ii) is the conventional incoherence condition, which is essential for controlling entrywise deviations in low-rank matrix estimation.
The moment conditions in Assumption \ref{assum_spotvol}(iii) can be satisfied under some assumptions on the process $X$, microstructure noise, and kernel function in \cite{figueroa2022kernel} by using similar arguments in \citet{kim2016asymptotic}.
When estimating the spot volatility at time $t$, we usually use data after time $t$.
This implies $E(\upsilon_{i,j} | \cF_{i,t_j})=0$ a.s. 
We note that the spot volatility estimator is asymptotically unbiased, thus, the estimation bias term goes to zero faster than $m^{-\frac{1}{8}}$.
 The uncorrelated condition, $E( c_{i,j'}\upsilon_{i,j} | \cF_{i,t_j})= E( c_{s,j} c_{s,j'}\upsilon_{i,j} | \cF_{i,t_j}) = 0$ a.s., is required to show the benefit of the smoothing scheme. 
This condition is satisfied, if the spot volatility can be represented by  $\cF_{i,t_j}$-adapted processes and independent noises.
For example, the usual time series structure, such as ARMA, satisfies it.
Thus, this condition is not restrictive. 
It is worth noting that if the uncorrelated condition is not satisfied, we have a slower convergence rate, for example, we have additionally $m^{-1/8}$ in Proposition \ref{element wise norm} and Theorem \ref{main_thm}. 

    \begin{assum}\label{assum_phi} ~
        \begin{itemize}
              \item[(i)] There are $c_{\min}$ and $c_{\max}>0$ so that, with the probability approaching one, as $D \rightarrow \infty$ and $n \rightarrow \infty$,
              \begin{align*}
              &c_{\min} < \lambda_{\min}(D^{-1}\bPhi(\bX)'\bPhi(\bX))< \lambda_{\max}(D^{-1}\bPhi(\bX)'\bPhi(\bX))<c_{\max},\\          
              &c_{\min} < \lambda_{\min}(n^{-1}\bPsi(\bW)'\bPsi(\bW))< \lambda_{\max}(n^{-1}\bPsi(\bW)'\bPsi(\bW))<c_{\max}.                    
              \end{align*}          
              \item[(ii)] $\max_{l\leq J_{1},i\leq D, d\leq d_{1}} E\phi_{l}(x_{id})^{2} < \infty$, and $\max_{l\leq J_{2},j\leq n, d\leq d_{2}} E\psi_{l}(w_{jd})^{2} < \infty$.
        \end{itemize}
    \end{assum}
Assumption \ref{assum_phi} is related to basis functions. 
Intuitively, the strong law of large numbers implies Assumption \ref{assum_phi}(i), which can be satisfied by normalizing common basis functions such as B-splines, polynomial series, or Fourier basis.

  \begin{assum} \label{assum_sieve}
      For all $d\leq d_{1}, d'\leq d_{2}, k \leq r$, 
      \begin{itemize}
          \item[(i)] the functions $g_{kd}(\cdot)$ and $h_{kd'}(\cdot)$ belong to a H\"older class $\mathcal{G}$ and $\mathcal{H}$ defined by, for some $L>0$,
		\begin{align*}
			&\mathcal{G} = \{g : |g^{(c)}(s)-g^{(c)}(t)| \leq L|s-t|^{\alpha}\},\\
   			&\mathcal{H} = \{h : |h^{(c')}(s)-h^{(c')}(t)| \leq L|s-t|^{\alpha'}\}.
        \end{align*}
          \item[(ii)] the sieve coefficients $\{b_{l,kd}\}_{l\leq J_{1}}$ satisfy for $\kappa = 2(c+\alpha) \geq 4$, as $J_{1} \rightarrow \infty$, 
		\begin{equation*}
			\sup_{x \in \mathcal{X}_{d}}\left|g_{kd}(x)-\sum_{l=1}^{J_{1}}b_{l,kd}\phi_{l}(x)\right|^{2} = O(J_{1}^{-\kappa}/D),
		\end{equation*}
		where $\mathcal{X}_{d}$ is the support of the $d$th element of $\bx_{i}$.
        Similarly, the sieve coefficients $\{a_{l,kd'}\}_{l\leq J_{2}}$ satisfy for $\kappa = 2(c'+\alpha') \geq 4$, as $J_{2} \rightarrow \infty$, 
		\begin{equation*}
			\sup_{w \in \mathcal{W}_{d'}}\left|h_{kd'}(w)-\sum_{l=1}^{J_{2}}a_{l,kd'}\psi_{l}(w)\right|^{2} = O(J_{2}^{-\kappa}/n),
		\end{equation*}
		where $\mathcal{W}_{d'}$ is the support of the $d'$th element of $\bw_{j}$.
          \item[(iii)] $\max_{k\leq r,l\leq J_{1},d\leq d_{1}}b_{l,kd}^2 = O(1/D)$ and $\max_{k\leq r,l\leq J_{2}, d' \leq d_{2}} a_{l,kd'}^2 = O(1/n)$.
        \end{itemize}
  \end{assum}
Assumption \ref{assum_sieve} pertains to the accuracy of the sieve approximation and can be satisfied using a common basis such as a polynomial basis or B-splines (see \citealp{chen2007large}). 
We have the following conditions that the idiosyncratic errors are weakly dependent on both dimensions, which are commonly imposed for high-dimensional factor analysis.

  \begin{assum} \label{DGP} ~
      \begin{itemize}
        \item[(i)] $E\eps_{i,j} = 0$ for all $i\leq D, j\leq n$; $\{\eps_{i,j}\}_{i\leq D, j\leq n}$ is independent of $\{\bx_{i}, \bw_{j}\}_{i\leq D, j \leq n}$.
          \item[(ii)] There is $C>0$ such that
          \begin{align*}
            \max_{m\leq D}\sum_{i=1}^{D}|E\eps_{i,j}\eps_{m,j}| < C,\\
            \frac{1}{nD}\sum_{i=1}^{D}\sum_{m=1}^{D}\sum_{j=1}^{n}\sum_{s=1}^{n}|E\eps_{i,j}\eps_{m,s}| <  C.
          \end{align*}
          \item[(iii)]  The covariance matrix $\bOmega_{\eps, j} =\cov(\beps_{j}) := (\omega_{\eps,ij})_{D\times D}$ for each $j = 1,\dots, n $, where $\beps_{j} = (\eps_{1,j},\ldots, \eps_{D,j})'$, satisfies
          $$
          \max_{j\leq n}\|\bOmega_{\eps, j }\| = O(\varphi_D), \text{ where } \varphi_D = o(D).
          $$
        \item[(iv)] There are $C_{1}>0$ and $C_{2}>0$ such that, for each $j, l\leq n$ and each $k\leq r$, 
        \begin{align*}            &|\cov(\sum_{i=1}^{D}u_{i,k}\eps_{i,j},\sum_{i=1}^{D}u_{i,k}\eps_{i,l})|\leq C_{1}\varphi_{D}\rho(|j-l|), \text{ where } \sum_{h=1}^{\infty}\rho(h) <\infty, \\
            & |\cov((\sum_{i=1}^{D}u_{i,k}\eps_{i,j})^2,(\sum_{i=1}^{D}u_{i,k}\eps_{i,l})^2)|\leq C_{2}\varphi_{D}^2\tilde{\rho}(|j-l|), \text{ where } \sum_{h=1}^{\infty}\tilde{\rho}(h) < \infty.
        \end{align*}
         In addition, for each $j \leq  n$ and $k \leq r$,         $E[(\sum_{i=1}^{D}u_{i,k}\eps_{i,j})^4] = O(\varphi_{D}^2).$
      \end{itemize}
  \end{assum}
Assumption \ref{DGP}(iii) is the conventional condition on the idiosyncratic covariance matrix, which allows for heterogeneity across the intraday dimension. 
For example, this includes the sparsity condition, which has been considered in many applications \citep{boivin2006more, fan2016incorporating}.
Assumption \ref{DGP}(iv) imposes weak temporal dependence on both the linear and squared projected error terms, and a fourth moment condition.

    \begin{assum}\label{assum_sign}
        The estimated instantaneous volatility matrix $\hat{\bSigma}_{D,n}$ and initial estimators $\{\hat{\lambda}_{k},\hat{U}_{k},\hat{V}_{k}\}_{k=1}^{r}$ satisfy
            $$\lim_{D\rightarrow \infty, n\rightarrow \infty}\mathbb{P}\left(\min_{s\in \{-1,1\}^{r}}\left\|\sum_{k=1}^{r}s_{k}\hat{\lambda}_{k}\hat{U}_{k}\hat{V}_{k}^{\prime}-\hat{\bSigma}_{D,n}\right\|_{F}^{2} < \left\|\sum_{k=1}^{r}s_{0k}\hat{\lambda}_{k}\hat{U}_{k}\hat{V}_{k}^{\prime}-\hat{\bSigma}_{D,n}\right\|_{F}^{2}\right) = 0.$$
    \end{assum}
\begin{remark}\label{sign_remark}
Assumption \ref{assum_sign} is related to the sign estimation in  \eqref{sign}.
In this paper, we conduct the singular value decomposition to estimate the left and right singular vectors separately, but these vectors can only be estimated up to a sign due to the sign problem of singular vectors.
 Hence, to define the signs uniquely, we impose this identifiability condition.
To understand Assumption \ref{assum_sign}, for simplicity, we consider the case of $r = 1$.
When $s_0 = 1$, Assumption \ref{assum_sign} implies
$$
\lim_{D\rightarrow \infty, n\rightarrow \infty}\mathbb{P}\left(\|-\hat{\lambda}_{1}\hat{U}_{1}\hat{V}_{1}^{\prime}-\hat{\bSigma}_{D,n}\|_{F}^{2} < \|\hat{\lambda}_{1}\hat{U}_{1}\hat{V}_{1}^{\prime}-\hat{\bSigma}_{D,n}\|_{F}^{2}\right) = 0.
$$
That is, the probability that $\hat{s}$ chooses a different sign than the true sign goes to zero as dimensions increase \citep{cho2017asymptotic}.
Thus, Assumption \ref{assum_sign} guarantees the identifiability of the sign problem. 
In light of this, Assumption \ref{assum_sign} is the natural assumption to make.
\end{remark}

We obtain the following elementwise convergence rate of the projected instantaneous volatility matrix estimator.
    \begin{pro} \label{element wise norm}
    Suppose that Assumptions \ref{assum_spotvol}--\ref{assum_sign} hold, $J_{1} = o(\sqrt{D})$, and $J_{2} = o(\sqrt{n})$. As $D,n,m,J_{1},J_{2} \rightarrow \infty$, we have
        \begin{align*}
            &\|\hat{\bG}(\bX)\hat{\bLambda}\hat{\bH}(\bW)^{\prime} - \bG(\bX)\bLambda\bH(\bW)^{\prime}\|_{\max}\\ 
            & \qquad  =O_{P}\Bigg(\frac{\varphi_D}{D}+ \max(J_{1},J_{2})\left(\sqrt{\frac{\rho_{m}}{m^{\frac{1}{8}}}} + \rho_{m}+  \frac{1}{\sqrt{\min (n m^{\frac{1}{4}}, D m^{\frac{1}{4}}, nD)}} \right) \\
            &\qquad\qquad\qquad\qquad  + \min(J_{1},J_{2})^{\frac{1}{2}-\frac{\kappa}{2}} + \frac{J_{1}^2}{D} + \frac{J_{2}^2}{n}\Bigg).  
        \end{align*}
    \end{pro}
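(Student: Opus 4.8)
The plan is to follow the Projected-PCA framework of \cite{fan2016projected}, adapted to the two-sided projection and to the spot-volatility estimation-error structure in Assumption \ref{assum_spotvol}(iii). First I would write the target difference entrywise: for each $(i,j)$,
$$[\hat{\bG}(\bX)\hat{\bLambda}\hat{\bH}(\bW)']_{ij} - [\bG(\bX)\bLambda\bH(\bW)']_{ij} = \sum_{k=1}^{r}\left(\hat{\lambda}_k \hat{U}_{ik}\hat{s}_k\hat{V}_{jk} - \lambda_k U_{ik} V_{jk}\right),$$
and telescope into three blocks: a singular-value block, a left-vector block, and a right-vector block. Since $r$ is fixed, it suffices to bound each $|\hat{\lambda}_k - \lambda_k|$, each entrywise deviation of $\hat{U}_k$ from $U_k$ up to $\sign(\langle \hat{U}_k, U_k\rangle)$, and each entrywise deviation of $\hat{V}_k$ from $V_k$ up to $\sign(\langle \hat{V}_k, V_k\rangle)$, together with the sign consistency from Assumption \ref{assum_sign}, and then recombine using the incoherence bounds in Assumption \ref{assum_spotvol}(ii) and the eigenvalue scale $\lambda_k \asymp \sqrt{nD}$ implied by Assumption \ref{assum_spotvol}(i).

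Second, I would set up the master perturbation expansion for the projected Gram matrix. Writing $\hat{\bSigma}_{D,n} = \bU\bLambda\bV' + \bSigma_{\eps} + \bDelta$, where $\bDelta = (\upsilon_{i,j} + \varsigma_{i,j})$ is the spot-volatility estimation error, I would expand $\bP_{\Phi}\hat{\bSigma}_{D,n}\hat{\bSigma}_{D,n}'\bP_{\Phi}$ around the projected signal $\bP_{\Phi}\bU\bLambda^2\bU'\bP_{\Phi}$. The projection annihilates the component of the noise and of the sieve residual that is orthogonal to the column space of $\bPhi(\bX)$; what remains is a sum of cross terms of the types $\bU\bLambda\bV'\bSigma_{\eps}'$, $\bSigma_{\eps}\bSigma_{\eps}'$, $\bU\bLambda\bV'\bDelta'$, and $\bDelta\bDelta'$, together with the sieve residual entering through $\bU = \bPhi(\bX)\bB + \bR(\bX)$. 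An identical expansion on the right uses $\bP_{\Psi}\hat{\bSigma}_{D,n}'\hat{\bSigma}_{D,n}\bP_{\Psi}$ around $\bP_{\Psi}\bV\bLambda^2\bV'\bP_{\Psi}$.

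Third, I would bound each perturbation block and feed it into a Davis--Kahan / $\sin\Theta$ type eigenvector perturbation argument to obtain entrywise control of $\hat{U}_k$ and $\hat{V}_k$; here the basis regularity in Assumption \ref{assum_phi} and the sieve accuracy in Assumption \ref{assum_sieve} translate operator-norm perturbations into the $\max$-norm, producing the $\max(J_1,J_2)$ prefactor, the sieve term $\min(J_1,J_2)^{(1-\kappa)/2}$, and the projection terms $J_1^2/D + J_2^2/n$. The noise block $\bSigma_{\eps}\bSigma_{\eps}'$, controlled via $\max_j\|\bOmega_{\eps,j}\| = O(\varphi_D)$ in Assumption \ref{DGP}(iii) and the weak-dependence bounds in Assumption \ref{DGP}(iv), contributes $\varphi_D/D$ after normalizing by the eigenvalue scale $nD$. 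The estimation-error blocks are where the moment structure of Assumption \ref{assum_spotvol}(iii) is essential: the martingale-difference property $E(\upsilon_{i,j}\mid\cF_{i,t_j})=0$ and the uncorrelatedness conditions $E(c_{i,j'}\upsilon_{i,j}\mid\cF_{i,t_j})=0$ force the cross terms between $\upsilon$ and the latent signal to vanish in expectation, so only their variances survive; combined with $E(\upsilon_{i,j}^{6})=O(m^{-3/4})$ (hence $\upsilon$ of order $m^{-1/8}$) and the bias $\varsigma = O_P(\rho_m)$, these produce the bias-variance cross term $\sqrt{\rho_m/m^{1/8}}$, the bias term $\rho_m$, and the averaged variance $1/\sqrt{\min(n m^{1/4}, D m^{1/4}, nD)}$.

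Finally I would collect the blocks by the triangle inequality, using the fixed rank $r$ and the sign consistency to reduce the number of summands to a constant, and read off the stated rate. The main obstacle is the estimation-error handling: because the effective ``noise'' here is not the usual i.i.d.\ array but the spot-volatility estimation error $\upsilon + \varsigma$, the rate is governed by how the martingale variance ($m^{-1/4}$) and the bias ($\rho_m$) propagate through the quadratic $\hat{\bSigma}_{D,n}\hat{\bSigma}_{D,n}'$ and then through two separate projections and eigendecompositions. Preventing the cross terms from inflating the rate requires repeated use of the uncorrelatedness conditions in Assumption \ref{assum_spotvol}(iii), and carefully tracking which of $n$, $D$, and $m^{1/4}$ dominates is precisely what yields the joint $\min(\cdot)$ inside the variance term and the $\min(J_1,J_2)$ in the sieve term.
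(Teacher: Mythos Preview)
Your proposal is correct and follows essentially the same route as the paper: telescope the product into singular-value, left-vector, and right-vector blocks; control each block by bounding the perturbation of the projected Gram matrices $\bP_{\Phi}\hat{\bSigma}_{D,n}\hat{\bSigma}_{D,n}'\bP_{\Phi}$ and $\bP_{\Psi}\hat{\bSigma}_{D,n}'\hat{\bSigma}_{D,n}\bP_{\Psi}$ around $\bU\bLambda^{2}\bU'$ and $\bV\bLambda^{2}\bV'$; use the martingale and uncorrelatedness structure in Assumption \ref{assum_spotvol}(iii) to kill the cross terms and extract the $\sqrt{\rho_{m}/m^{1/8}}$, $\rho_{m}$, and $1/\sqrt{\min(nm^{1/4},Dm^{1/4},nD)}$ pieces; and then invoke Assumption \ref{assum_sign} to pass from the sign-matched estimator to $\hat{s}$.

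One refinement worth making explicit: you write ``Davis--Kahan / $\sin\Theta$ type'' for the entrywise control of $\hat{U}_{k}$ and $\hat{V}_{k}$, but the classical $\sin\Theta$ theorem only delivers Frobenius or operator-norm control of the eigenspace, which is not enough for a $\|\cdot\|_{\max}$ conclusion. The paper instead applies the $\ell_{\infty}$ eigenvector perturbation bound of \cite{fan2018eigenvector}, which under the incoherence condition in Assumption \ref{assum_spotvol}(ii) gives
\[
\max_{k\leq r}\|\hat{U}_{k}-U_{k}\|_{\infty}\;\leq\; C\,\frac{\|\bP_{\Phi}\hat{\bSigma}_{D,n}\hat{\bSigma}_{D,n}'\bP_{\Phi}-\bU\bLambda^{2}\bU'\|_{\infty}}{\bar{\gamma}\sqrt{D}},
\]
with the row-sum norm on the right. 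Your remark that the basis regularity ``translates operator-norm perturbations into the max-norm'' is the correct intuition (the projected eigenvectors live in the $J_{1}d_{1}$-dimensional range of $\bPhi(\bX)$), but you should either invoke this $\ell_{\infty}$ perturbation result directly or spell out the low-dimensional-range argument; otherwise the step from spectral perturbation to entrywise control is a gap.
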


\begin{remark} \label{remark1}
Proposition \ref{element wise norm} shows that the projected instantaneous volatility matrix estimator has the convergence rate $J\sqrt{\frac{\rho_m}{m^{1/8}}} + J^{\frac{1}{2}-\frac{\kappa}{2}} + \frac{J}{\sqrt{\min (n m^{1/4}, D m^{1/4}, nD)}} +\frac{J^2}{\min (D, n)}$ up to the sparsity level, when $\rho_m < m^{-1/8}$ and $J =J_{1}=J_{2}$.
We note that the convergence rate $\rho_m$ of the bias term is faster than the convergence rate  $m^{-1/8}$ of the martingale difference term in the spot volatility estimator, and the term $\sqrt{\frac{\rho_m}{m^{1/8}}}$ arises from the cross product between two components.
The bias comes from the drift term that has the order of $m^{-1}$, and due to the subsampling scheme to handle the microstructure noise, $\rho_m$ usually is the order of $m^{-1/2}$. 
The martingale term is negligible because of the smoothing effect when calculating the singular components.
It is worth noting that if $\rho_m$ is zero,  with the fixed $m$, we can obtain the consistency.
The terms $\{\frac{1}{\sqrt{nD}}, \frac{1}{D}\}$ represents the cost of learning daily time series dynamics, while the terms $\{\frac{1}{\sqrt{nD}}, \frac{1}{n}\}$ corresponds to the cost of learning intraday periodic patterns via the right singular vector estimator.
We note that $J_1$ and $J_2$ are related to the sieve approximation. 
That is, $\max (J_1, J_2)$ is the cost to approximate the unknown nonparametric functions $g_k(\cdot)$ and $h_k(\cdot)$.  
In addition, the term $\min(J_{1},J_{2})^{\frac{1}{2}-\frac{\kappa}{2}}$ is due to the approximation error.
If $g_k(\cdot)$ and $h_k(\cdot)$ are known, the convergence rate is $\sqrt{\frac{\rho_m}{m^{1/8}}} + \min (n m^{1/4}, D m^{1/4}, nD,n^2, D^2)^{-1/2}$ up to the sparsity level with the finite $J_1$ and $J_2$.
\end{remark}

The following theorem provides the convergence rate of the predicted instantaneous volatility using the TIP-PCA method.
    \begin{thm} \label{main_thm}
    Suppose that Assumptions \ref{assum_spotvol}--\ref{assum_sign} hold, $J_{1} = o(\sqrt{D})$, and $J_{2} = o(\sqrt{n})$. As $D,n,m,J_{1},J_{2} \rightarrow \infty$, we have
        \begin{align*}
            &\max_{j\leq n}\left|\tilde{c}_{D+1,j} - E[c_{D+1,j} | \FF_{D}]\right| \\
            & \qquad = O_{P}\left(\frac{\varphi_D}{D}  + J_{2}\left(\sqrt{\frac{\rho_{m}}{m^{\frac{1}{8}}}} + \rho_{m} + \frac{1}{\sqrt{\min (n m^{\frac{1}{4}}, D m^{\frac{1}{4}}, nD)}}\right) + J_{2}^{\frac{1}{2}-\frac{\kappa}{2}} + \frac{J_{2}^2}{n}\right) \nonumber\\
            &\qquad\qquad + O_{P}\left( J_{1}^{\frac{3}{2}}\left(\sqrt{\frac{\rho_{m}}{m^{\frac{1}{8}}}}+\rho_{m}+ \frac{1}{\sqrt{\min (n m^{\frac{1}{4}}, D m^{\frac{1}{4}}, nD)}}\right)+ J_{1}^{1 - \frac{\kappa}{2}}+ \frac{J_{1}^{\frac{5}{2}}}{D} \right)\max_{l\leq J_{1}}\sup_{x}|\phi_{l}(x)|.
        \end{align*}    
    \end{thm}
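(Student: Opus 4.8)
The plan is to reduce the statement to the in-sample matrix bound of Proposition \ref{element wise norm} together with a separate analysis of the out-of-sample extension of the left singular vector at the fresh covariate $\bx_{D+1}$. Writing the prediction and its target coordinatewise as
$$\tilde c_{D+1,j} = \sum_{k=1}^r \hat g_k(\bx_{D+1})\,\hat\lambda_k\,\hat h_k(\bw_j), \qquad E[c_{D+1,j}\mid\FF_D] = \sum_{k=1}^r g_k(\bx_{D+1})\,\lambda_k\,h_k(\bw_j),$$
where $\hat h_k(\bw_j)$ denotes the $(j,k)$ entry of $\hat\bH(\bW)$, I would split the difference into
$$\underbrace{\sum_{k}\big(\hat g_k(\bx_{D+1})-g_k(\bx_{D+1})\big)\hat\lambda_k\hat h_k(\bw_j)}_{\mathrm{(A)}} + \underbrace{\sum_{k}g_k(\bx_{D+1})\big(\hat\lambda_k\hat h_k(\bw_j)-\lambda_k h_k(\bw_j)\big)}_{\mathrm{(B)}}.$$
This isolates the genuinely new out-of-sample error (A) from a piece (B) whose structure mirrors the intraday part of Proposition \ref{element wise norm}.

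For (B), I would use the incoherence of Assumption \ref{assum_spotvol}(ii), applied to the new day whose covariate is drawn from the same law, to bound $|g_k(\bx_{D+1})|=|u_{D+1,k}|=O_P(1/\sqrt D)$, and then reuse the intermediate bounds from the proof of Proposition \ref{element wise norm} to control $\max_{j\le n}|\hat\lambda_k\hat h_k(\bw_j)-\lambda_k h_k(\bw_j)|$ uniformly in $j$. Since $\hat\lambda_k=O_P(\sqrt{nD})$ while the singular-value/right-vector error carries the same $J_2$-type and $\varphi_D$-type rates as in the Proposition, the $1/\sqrt D$ prefactor rescales these into exactly the first line of the stated bound. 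Sign consistency (Assumption \ref{assum_sign}, Remark \ref{sign_remark}) is invoked here so that $\hat s_k$ aligns $\hat V_k$ with $h_k$ and (B) genuinely estimates a difference rather than a sign-flipped quantity.

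The heart of the argument is (A). Using $\hat g_k(\bx)=\phi(\bx)'\hat\bb_k$ and $g_k(\bx_{D+1})=\phi(\bx_{D+1})'\bb_k+R_k(\bx_{D+1})$, I would decompose
$$\hat g_k(\bx_{D+1})-g_k(\bx_{D+1}) = \phi(\bx_{D+1})'(\hat\bb_k-\bb_k) - R_k(\bx_{D+1}).$$
The approximation error $R_k(\bx_{D+1})$ is controlled by Assumption \ref{assum_sieve} and, after multiplication by the $O_P(\sqrt D)$ factor $\hat\lambda_k\hat h_k(\bw_j)$ and accounting for the basis-norm factors, contributes the $J_1^{1-\kappa/2}$ term. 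For the leading piece I would bound $|\phi(\bx_{D+1})'(\hat\bb_k-\bb_k)|\le \|\phi(\bx_{D+1})\|\,\|\hat\bb_k-\bb_k\|$, use $\|\phi(\bx_{D+1})\|\le\sqrt{J_1 d_1}\,\max_{l\le J_1}\sup_x|\phi_l(x)|$, and control $\|\hat\bb_k-\bb_k\|$ through $\hat\bb_k-\bb_k=(\bPhi(\bX)'\bPhi(\bX))^{-1}\bPhi(\bX)'(\hat U_k-\bPhi(\bX)\bb_k)$ together with $\lambda_{\min}(D^{-1}\bPhi(\bX)'\bPhi(\bX))>c_{\min}$ from Assumption \ref{assum_phi}(i). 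Assembling the $O_P(\sqrt D)$ magnitude of $\hat\lambda_k\hat h_k$, the $\sqrt{J_1}$ basis-norm factor, and the estimation rate of $\|\hat\bb_k-\bb_k\|$ (itself of a form paralleling Proposition \ref{element wise norm} but rescaled by $1/\sqrt D$) yields the $J_1^{3/2}$, $J_1^{1-\kappa/2}$, and $J_1^{5/2}/D$ terms, each carrying the $\max_{l\le J_1}\sup_x|\phi_l(x)|$ factor of the second line.

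The main obstacle is the out-of-sample coefficient error $\|\hat\bb_k-\bb_k\|$ evaluated against a fresh basis vector: unlike the in-sample quantities in Proposition \ref{element wise norm}, $\phi(\bx_{D+1})$ does not lie in the column space used for estimation, so I must bound the projection of the estimated-eigenvector error $\hat U_k-\bPhi(\bX)\bb_k$ onto a new basis evaluation while tracking the correct powers of $J_1$ and the growth of $\max_{l\le J_1}\sup_x|\phi_l(x)|$. Additional care is needed to keep all bounds uniform over $j\le n$ for the right-vector factor, which I would handle via the incoherence $\max_{j\le n}|\hat h_k(\bw_j)|=O_P(1/\sqrt n)$ inherited by the projected right singular vector.
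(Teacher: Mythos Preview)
Your proposal is correct and follows essentially the same approach as the paper: both arguments hinge on bounding $\|\hat\bb_k-\bb_k\|$ via $(\bPhi(\bX)'\bPhi(\bX))^{-1}\bPhi(\bX)'(\hat U_k-\bG(\bX)_k)$ together with Assumption~\ref{assum_phi}(i), multiplying by $\|\phi(\bx_{D+1})\|\le \sqrt{J_1d_1}\max_{l}\sup_x|\phi_l(x)|$ to handle the fresh covariate, and then combining with the singular-value and right-singular-vector rates already established en route to Proposition~\ref{element wise norm}, under the sign consistency of Assumption~\ref{assum_sign}. The only cosmetic difference is that the paper uses a four-term telescoping of $\hat\bg\hat\bLambda\hat\bH'-\bg\bLambda\bH'$ (separating the $\hat\bLambda-\bLambda$, $\hat\bH-\bH$, $\hat\bg-\bg$, and cross contributions) whereas you group these into your two blocks (A) and (B); this changes nothing of substance.
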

Theorem \ref{main_thm} indicates that the proposed TIP-PCA consistently predicts the one-day-ahead instantaneous volatility.
As discussed in Remark \ref{remark1}, we have $\sqrt{\frac{\rho_m}{m^{1/8}}}$, $D^{-1}$, $n^{-1}$, $\min (n m^{1/4}, D m^{1/4}, nD)^{-1/2}$ and $J_i$'s terms.
To predict the one-day-ahead instantaneous volatility process, we need to learn the interday time series dynamics and intraday periodic patterns. 
Thus, the terms $D^{-1}$ and $n^{-1}$ are the costs based on the projection-based estimation approach.
 We note that the difference from the result in Proposition \ref{element wise norm} arises from the estimation of the nonparametric function $g_{k}(\cdot)$.
Specifically, out-of-sample predictions with any covariates $\bx$ necessitate additional costs such as $J_{1}^{1/2}$ and the supremum term $\max_{l\leq J_{1}}\sup_{x}|\phi_{l}(x)|$, whereas the result in Proposition \ref{element wise norm} does not require them because it is based on in-sample prediction.

\section{Simulation Study} \label{simulation}
In this section, we conducted simulations to examine the finite sample performance of the proposed TIP-PCA method.
We first generated high-frequency observations as follows: for $i= 1,\dots, D+1$, $s=0,\dots, m$, and $t_{s} = s/m$,
    \begin{align*}
        &Y_{i,t_{s}} = Z_{i,t_{s}} + e_{i,t_{s}},\\
        &dZ_{i,t} = (\mu - \sigma_{i,t}^2/2)dt + \sigma_{i,t} dB_{i,t} + J_{i,t}dP_{i,t},\\
        &\sigma_{i,t_{s}}^2 = \tilde{\sigma}_{i}^2 h(t_{s}) + \varepsilon_{i,t_{s}},
    \end{align*}
    where we set microstructure noise as $e_{i,t_{s}} \sim \mathcal{N}(0,0.0005^2)$ and the initial value as $Z_{1,0} = 1$; $B_{i,t}$ is a standard Brownian motion; for the jump part, we set $J_{i,t} \sim \mathcal{N}(-0.01, 0.02^2)$ and $P_{i,t+\Delta} - P_{i,t} \sim \text{Poisson}(36\Delta/252)$; $\tilde{\sigma}_{i} = b_{0} + b_{1}\tilde{\sigma}_{i-1} + b_{2}\frac{1}{5}\sum_{s=1}^{5}\tilde{\sigma}_{i-s} + b_{3}\frac{1}{22}\sum_{s=1}^{22}\tilde{\sigma}_{i-s} + \zeta_{i}$,  $\zeta_{i} \sim \mathcal{N}(0,1)$,  $h(t_{s}) = \gamma_{0} + \gamma_{1}(t_{s}-0.6)^2$, and $\varepsilon_{i,t_{s}} = q(t_{s})\xi_{i,t_{s}}$, where $q(t_{s})^{2} = 0.1+0.5(2t_{s}-1)^{2}$ and $\xi_{i,t_{s}} \sim \mathcal{N}(0,0.01^2)$.
The model parameters were set to be
    \begin{align*}
    \mu = 0.05/252, \gamma_{0} = 0.04/252, \gamma_{1} = 0.5/252, \\
    b_{0} = 0.5, b_{1} =0.372, b_{2} = 0.343, b_{3} = 0.224.
    \end{align*}
The normalized parameter values above imply the daily time unit, and we adapted the estimated coefficients studied in \cite{corsi2009simple} to generate $\tilde{\sigma}_{i}$.
We note that in each simulation, the instantaneous volatility process was generated until all instantaneous volatility values were positive, based on the data-generating process described above.
We set $m =$ 23,400, which indicates that the data are observed every second over a period of 6.5 trading hours per day.

For each simulation, we used the jump robust pre-averaging method \citep{figueroa2022kernel} to estimate the instantaneous volatility, $c_{i,\tau}=\sigma_{i,\frac{\tau}{n}}^2$, at a frequency of every 5 or 10 minutes (i.e., $n=m/300 \text{ or } m/600$) for each $i$-th day as follows: for $\tau = 1, \dots, n$,
\begin{equation}\label{pre-spot}
\hat{c}_{i,\tau} = \frac{1}{\phi_{k_{n}}(g)}\sum_{s=1}^{m-k_{n}+1}K_{b_{m}}(t_{s-1}-\frac{\tau}{n})\left(\bar{Y}_{i,s}^2 - \frac{1}{2}\hat{Y}_{i,s}\right)\mathbf{1}_{\{|\bar{Y}_{i,s}|\leq \nu_{m}\}},
\end{equation}
where $K_{b}(x) = K(x/b)/b$, the bandwidth size $b_{m} = 1/n$, the weight function $g(x) = 2x \wedge (1-x)$, 
\begin{align*}
    &\bar{Y}_{i,s} = \sum_{l=1}^{k_{m}-1}g\left(\frac{l}{k_{m}}\right)(Y_{i,t_{s+l}}-Y_{i,t_{s+l-1}}), \qquad  \phi_{k_{m}}(g) = \sum_{i=1}^{k_{m}}g\left(\frac{i}{k_{m}}\right)^2,\\
    &\hat{Y}_{i,s} = \sum_{l=1}^{k_{m}}\biggl\{g\left(\frac{l}{k_{m}}\right)-g\left(\frac{l-1}{k_{m}}\right)\biggl\}^2 (Y_{i,t_{s+l}}-Y_{i,t_{s+l-1}})^2,
\end{align*}
$\mathbf{1}_{\{\cdot\}}$ is an indicator function, and $\nu_{m} = 1.8\sqrt{\text{BPV}}(k_{m}/m)^{0.47}$, where the bipower variation $\text{BPV} = \frac{\pi}{2}\sum_{s=2}^{m}|Y_{i,t_{s-1}} - Y_{i,t_{s-2}}| |Y_{i,t_{s}} - Y_{i,t_{s-1}}|$.
We used the uniform kernel function and the data-driven approach to obtain the preaveraging window size, $k_m$, as suggested in Section 3.1 of \cite{figueroa2022kernel}.

With the instantaneous volatility estimates spanning $D$ days, $\hat{\bSigma}_{D,n} = (\hat{c}_{i,\tau})_{D\times n}$, we examined the out-of-sample performance of estimating the one-day-ahead instantaneous volatility process.
For comparison, the TIP-PCA, AVE, AR, SARIMA, HAR, HAR-D, XGBoost, and PC methods were employed to predict $c_{D+1,\tau}$, for $\tau = 1, \dots, n$.
Specifically, for the TIP-PCA, we utilized the ex-post daily, weekly, and monthly realized volatilities and the intraday time sequence $\{\frac{\tau}{n}\}_{\tau=1}^{n}$ as covariates for $\bX$ and $\bW$, respectively. 
In addition, the additive polynomial basis with $J_{1}=2$ and $J_{2}=3$ are used for the sieve basis of TIP-PCA, as discussed in Section \ref{tuning}.
AVE represents estimates obtained by the column mean of $\hat{\bSigma}_{D,n}$. 
AR and HAR represent predicted values obtained with the autoregressive model of order 1 and the HAR model, respectively, within each column of $\hat{\bSigma}_{D,n}$.
PC represents the last row of the estimated low-rank matrix using the best rank-$r$ matrix approximation based on $\hat{\bSigma}_{D,n}$.
For TIP-PCA and PC, we used rank 1 as suggested by the eigenvalue ratio method \citep{ahn2013eigenvalue}.
We note that  AR and HAR account for the time series dynamics while AVE and PC cannot.
However, AR and HAR have the overparameterization problem.
PC can partially explain the periodic pattern using the rank-one right singular vector,  while other competitors cannot explicitly account for the pattern due to the random noise $\varepsilon_{i,t_j}$.  
To address intraday periodic patterns in addition to daily autoregressive time series dynamics, we also employed the existing methods below. 
SARIMA represents $n$-step ahead predicted values obtained with the seasonal ARIMA(1,1,1) \citep{sheppard2010financial}, with the length of the seasonal cycle as $n$.
HAR-D represents the modified HAR model that includes the diurnal effect and previous intraday component in addition to the ex-post daily, weekly, and monthly realized volatilities (for details, see \citealp{zhang2024volatility}). 
Lastly, to account for the nonlinear impacts, we included XGBoost \citep{chen2016xgboost}, a decision-tree-based ensemble algorithm. 
We utilized the same hyperparameters as those specified in \cite{zhang2024volatility}.
We note that SARIMA, HAR-D, and XGBoost were conducted after vectorizing $\hat{\bSigma}_{D,n}$ as a time series vector.
We generated high-frequency data with $m =$ 23,400 for 200 consecutive days.
We used the subsampled log prices of the last $D = 50, 100, 150,$ and $200$ days.
To check the performance of the instantaneous volatility, we calculated the mean squared prediction errors (MSPE) as follows:
$$
\frac{1}{n}\sum_{\tau=1}^{n}(\tilde{c}_{D+1,\tau}-c_{D+1,\tau})^{2},
$$
where $\tilde{c}_{D+1,\tau}$ is one of the above one-day-ahead instantaneous volatility estimators. 
Then, we calculated the sample average of MSPEs over 500 simulations.

\begin{figure}
	\includegraphics[width=\linewidth]{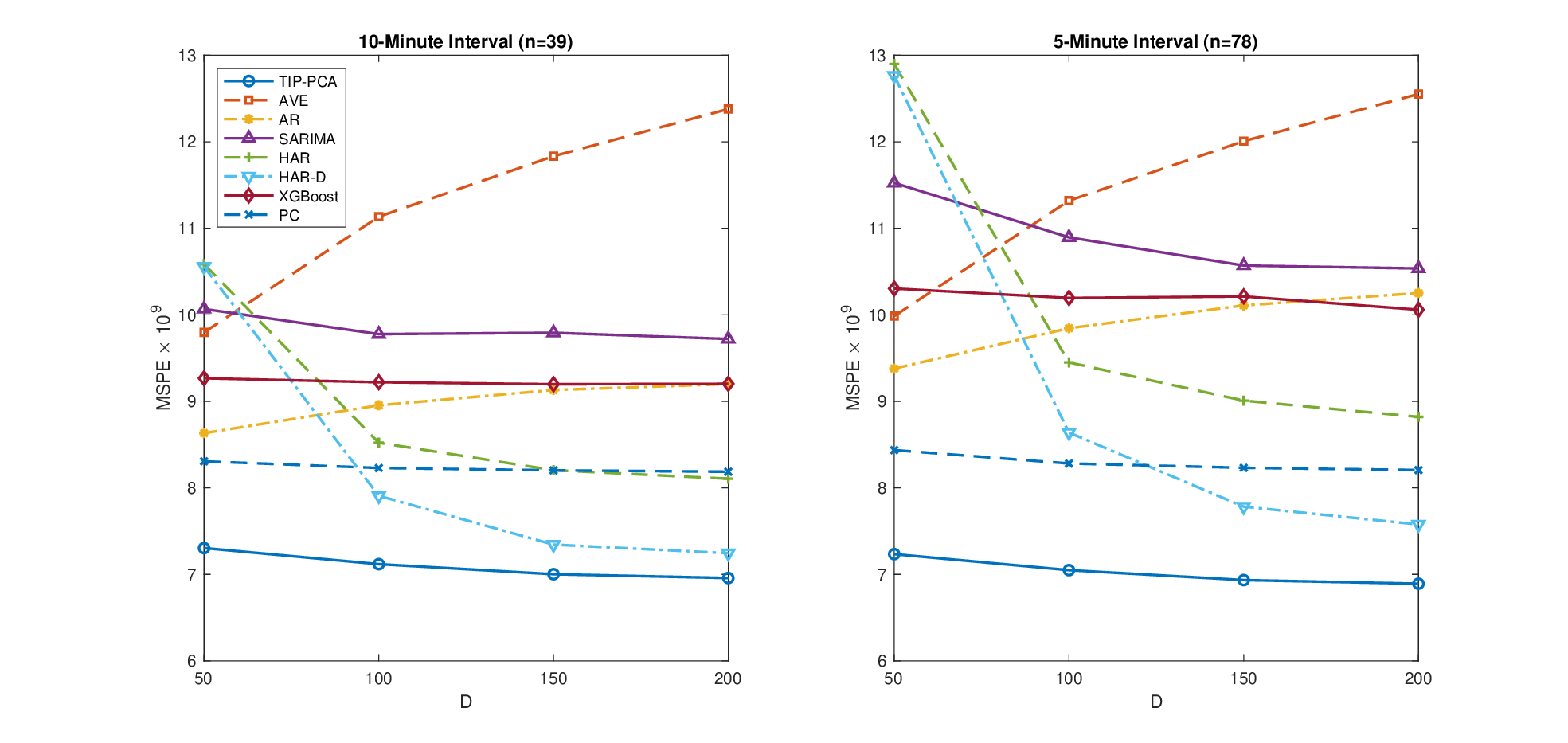}
	\centering	
	\caption{MSPE$\times 10^9$ for the TIP-PCA, AVE, AR, SARIMA, HAR, HAR-D, XGBoost, and PC.}				\label{sim_mspe}
\end{figure}

Figure \ref{sim_mspe} presents the average MSPEs of one-day-ahead intraday instantaneous volatility estimators with $D= \{50, 100, 150, 200\}$ and $n=\{39, 78\}$. 
We note that for each simulation, since we used the subsampled data, the target future volatility is the same for each different $D$.
Figure \ref{sim_mspe} makes evident that the TIP-PCA method demonstrates the best performance. 
This is because TIP-PCA can accurately predict the future instantaneous volatility process by leveraging the features of the HAR model and the U-shaped intraday volatility and handling of the overparameterization problem. 
Additionally, the MSPEs of TIP-PCA tend to decrease as the number of daily observations and intraday instantaneous volatility estimators increases. 
This finding aligns with the theoretical results in Section \ref{asymp}. 
In contrast, the MSPEs of AVE and AR increase as the number of daily observations increases. 
This may be because they do not include the HAR model feature and consider old information deemed unhelpful. 
SARIMA and XGBoost do not perform well because they do not incorporate the interday HAR model feature.
Furthermore, the HAR method does not perform well due to its inability to integrate the U-shaped intraday volatility feature and the overparameterization problem.
By considering the previous intraday component as well as the diurnal effect, HAR-D improves performance compared to HAR. 
However, HAR-D underperforms TIP-PCA because it cannot fully incorporate the intraday periodic pattern.
We note that we also considered a simulation study for the higher rank case ($r=2$) in Section S.1 of the online supplement. 
The result is similar.

\section{Empirical Study} \label{empiric}
In this section, we applied the proposed TIP-PCA method to an intraday instantaneous volatility prediction using real high-frequency trading data.
We obtained intraday data of the S\&P 500 index ETF (SPY) and ETFs that represent the 11 Global Industrial Classification Standard (GICS) sector index funds (XLC, XLY, XLP, XLE, XLF, XLV, XLI, XLB, XLRE, XLK, and XLU) from July 2021 to June 2022 from the TAQ database in the Wharton Research Data Services (WRDS) system.
We collected high-frequency data subsampled every 1 second and excluded days with early stock market closures during this period.
This subsampling helps reduce the effect of irregular observation time points \citep{li2023robust}.
We used the log prices and employed the jump robust pre-averaging estimation procedure defined in Section \ref{simulation} to estimate the instantaneous variance at a frequency of every 10 minutes (i.e., $n=39$).
Then, we conducted the TIP-PCA, AVE, AR, SARIMA, HAR, HAR-D, XGBoost, and PC methods as described in Section \ref{simulation} using the in-sample period data to predict the one-day-ahead instantaneous volatilities. 
In addition, we added TIP-PCA-S, which is a simplified version of the proposed method. 
Specifically, we fit the HAR model based on the row mean of $\hat{\bSigma}_{D,n}$ to predict the one-day-ahead integrated volatility value, while the U-shape is fitted using the column mean of $\hat{\bSigma}_{D,n}$. 
The predicted volatility process is then obtained by multiplying the predicted integrated volatility by the fitted intraday vector.
We note that TIP-PCA-S basically generates a rank one matrix by averaging each dimension of the volatility matrix.
We used the rolling window scheme, where the in-sample period was 63 days (i.e., one quarter).
The out-of-sample forecasting period consisted of $q = 189$ days, resulting in a total of $n \times q = 7{,}371$ predictions (i.e., one prediction every 10 minutes over 189 days).

To measure the performance of the predicted instantaneous volatility, we utilized the mean squared prediction errors (MSPE) and QLIKE \citep{patton2011volatility} as follows:
\begin{align*}
    &\text{MSPE} = \frac{1}{nq}\sum_{i=1}^{q}\sum_{j = 1}^{n}(\tilde{c}_{D+i,j} - \hat{c}_{D+i,j})^{2},\\
    &\text{QLIKE} = \frac{1}{nq}\sum_{i=1}^{q}\sum_{j = 1}^{n}(\log{\tilde{c}_{D+i,j}} + \frac{\hat{c}_{D+i,j}}{\tilde{c}_{D+i,j}}),
\end{align*}
where $\tilde{c}_{D+i,j}$ is one of the TIP-PCA, AVE, AR, SARIMA, HAR, HAR-D, XGBoost, and PC estimates.
We predicted one-day-ahead conditional expected instantaneous volatilities using in-sample period data. 
Additionally, since we do not know the true conditional expected instantaneous volatility, to assess the significance of differences in prediction performances, we conducted the Diebold and Mariano (DM) test \citep{diebold2002comparing} based on MSPE and QLIKE. 
We compared the proposed TIP-PCA method with other methods. 
Importantly, because we conducted multiple hypothesis tests for each ETF, including the DM test based on MSPE and QLIKE, as well as various tests for Value at Risk (VaR), all of which rely on out-of-sample data, it is crucial to control the False Discovery Rate (FDR). 
Therefore, we adjusted the $p$-values using the Benjamini–Hochberg (BH) procedure \citep{benjamini1995controlling} at a significance level of $\alpha = 0.05$ to control the FDR across all hypothesis tests conducted in this section.

\begin{table}[t]
    \centering
    \caption{MSPEs and QLIKEs for the TIP-PCA, AVE, AR, SARIMA, HAR, HAR-D, XGBoost, PC, and TIP-PCA-S.} \label{MSPE and QLIKE}
        \scalebox{0.8}{
    \begin{tabular}{llllllllll}
        \hline
         & TIP-PCA & AVE & AR & SARIMA & HAR & HAR-D & XGBoost & PC & TIP-PCA-S \\
        \midrule
        \multicolumn{10}{c}{MSPE$\times 10^9$} \\
        \cmidrule(lr){2-10}
        SPY & 2.239 & 3.213\textsuperscript{***} & 2.903\textsuperscript{***} & 3.082\textsuperscript{***} & 3.328\textsuperscript{***} & 2.942\textsuperscript{***} & 2.843\textsuperscript{***} & 2.407\textsuperscript{***} & 2.433\textsuperscript{***} \\
        XLC & 3.308 & 4.491\textsuperscript{***} & 4.055\textsuperscript{***} & 4.200\textsuperscript{***} & 4.794\textsuperscript{***} & 4.719\textsuperscript{***} & 4.105\textsuperscript{***} & 3.455\textsuperscript{**} & 3.782\textsuperscript{***} \\
        XLY & 8.606 & 11.992\textsuperscript{***} & 10.327\textsuperscript{***} & 10.375\textsuperscript{***} & 12.162\textsuperscript{***} & 12.399\textsuperscript{***} & 12.533\textsuperscript{***} & 9.157\textsuperscript{***} & 9.507\textsuperscript{***} \\
        XLP & 0.593 & 0.798\textsuperscript{***} & 0.713\textsuperscript{***} & 0.828\textsuperscript{***} & 0.804\textsuperscript{***} & 0.752\textsuperscript{***} & 0.719\textsuperscript{***} & 0.604 & 0.627\textsuperscript{***} \\
        XLE & 8.471 & 11.792\textsuperscript{***} & 9.516\textsuperscript{***} & 9.399\textsuperscript{***} & 10.017\textsuperscript{***} & 10.105\textsuperscript{***} & 10.919\textsuperscript{***} & 8.559 & 8.909\textsuperscript{***} \\
        XLF & 2.380 & 3.510\textsuperscript{***} & 2.854\textsuperscript{***} & 3.121\textsuperscript{***} & 3.085\textsuperscript{***} & 2.897\textsuperscript{***} & 3.118\textsuperscript{***} & 2.524\textsuperscript{***} & 2.507\textsuperscript{***} \\
        XLV & 0.918 & 1.223\textsuperscript{***} & 1.069\textsuperscript{***} & 1.150\textsuperscript{***} & 1.214\textsuperscript{***} & 1.172\textsuperscript{***} & 1.334\textsuperscript{***} & 0.953 & 1.052\textsuperscript{***} \\
        XLI & 1.817 & 2.568\textsuperscript{***} & 2.344\textsuperscript{***} & 2.325\textsuperscript{***} & 2.792\textsuperscript{***} & 2.373\textsuperscript{***} & 2.313\textsuperscript{***} & 1.851 & 1.986\textsuperscript{***} \\
        XLB & 2.182 & 2.993\textsuperscript{***} & 2.605\textsuperscript{***} & 2.819\textsuperscript{***} & 2.965\textsuperscript{***} & 2.891\textsuperscript{***} & 2.912\textsuperscript{***} & 2.254\textsuperscript{**} & 2.376\textsuperscript{***} \\
        XLRE & 1.803 & 2.339\textsuperscript{***} & 2.120\textsuperscript{***} & 2.583\textsuperscript{***} & 2.400\textsuperscript{***} & 2.218\textsuperscript{***} & 2.327\textsuperscript{***} & 1.819 & 1.881\textsuperscript{***} \\
        XLK & 7.477 & 10.233\textsuperscript{***} & 9.019\textsuperscript{***} & 9.157\textsuperscript{***} & 10.628\textsuperscript{***} & 9.819\textsuperscript{***} & 9.538\textsuperscript{***} & 7.879\textsuperscript{***} & 8.159\textsuperscript{***} \\
        XLU & 0.909 & 1.143\textsuperscript{***} & 1.026\textsuperscript{***} & 1.129\textsuperscript{***} & 1.120\textsuperscript{***} & 1.081\textsuperscript{***} & 1.177\textsuperscript{***} & 0.926 & 0.939\textsuperscript{***} \\
        \midrule
        \multicolumn{10}{c}{QLIKE} \\
        \cmidrule(lr){2-10}
        SPY & -9.054 & -8.932\textsuperscript{***} & -9.095\textsuperscript{*} & -6.983\textsuperscript{***} & -8.695\textsuperscript{***} & -7.098\textsuperscript{**} & -9.098\textsuperscript{*} & -9.171\textsuperscript{***} & -9.150\textsuperscript{***} \\
        XLC & -8.871 & -8.801\textsuperscript{**} & -8.898 & -7.061\textsuperscript{***} & -1.790 & -8.378\textsuperscript{***} & -8.897 & -8.973\textsuperscript{***} & -8.970\textsuperscript{***} \\
        XLY & -8.367 & -8.179\textsuperscript{***} & -8.297\textsuperscript{***} & -7.600\textsuperscript{***} & -7.987\textsuperscript{***} & -6.081\textsuperscript{***} & -8.286\textsuperscript{***} & -8.379\textsuperscript{***} & -8.357\textsuperscript{***} \\
        XLP & -9.477 & -9.378\textsuperscript{***} & -9.444\textsuperscript{***} & -7.934\textsuperscript{***} & -9.316\textsuperscript{***} & -8.932\textsuperscript{***} & -9.450\textsuperscript{***} & -9.484\textsuperscript{***} & -9.473\textsuperscript{***} \\
        XLE & -8.090 & -8.013\textsuperscript{***} & -8.060\textsuperscript{***} & -7.993\textsuperscript{***} & -7.947\textsuperscript{***} & -7.971 & -8.068\textsuperscript{***} & -8.093 & -8.085\textsuperscript{***} \\
        XLF & -8.393 & -8.327\textsuperscript{***} & -8.379\textsuperscript{***} & -8.303\textsuperscript{***} & -8.325\textsuperscript{***} & -8.057\textsuperscript{***} & -8.377\textsuperscript{***} & -8.391\textsuperscript{**} & -8.392\textsuperscript{**} \\
        XLV & -9.417 & -9.302\textsuperscript{***} & -9.391\textsuperscript{***} & -6.121\textsuperscript{***} & -9.153\textsuperscript{***} & -6.680 & -9.371\textsuperscript{***} & -9.451\textsuperscript{***} & -9.425\textsuperscript{**} \\
        XLI & -9.134 & -8.997\textsuperscript{***} & -9.109\textsuperscript{**} & -6.366\textsuperscript{***} & -8.579\textsuperscript{***} & -8.638\textsuperscript{***} & -9.122 & -9.197\textsuperscript{***} & -9.358\textsuperscript{***} \\
        XLB & -9.104 & -8.947\textsuperscript{***} & -9.025\textsuperscript{***} & -6.564\textsuperscript{***} & -8.670\textsuperscript{***} & -8.378\textsuperscript{***} & -9.047\textsuperscript{***} & -9.118\textsuperscript{***} & -9.093\textsuperscript{***} \\
        XLRE & -9.082 & -8.986\textsuperscript{***} & -9.036\textsuperscript{***} & -7.208\textsuperscript{***} & -8.845\textsuperscript{***} & -8.244\textsuperscript{**} & -9.029\textsuperscript{***} & -9.075\textsuperscript{*} & -9.067\textsuperscript{***} \\
        XLK & -8.487 & -8.316\textsuperscript{***} & -8.452\textsuperscript{***} & -6.959\textsuperscript{***} & -8.172\textsuperscript{***} & -7.382\textsuperscript{***} & -8.442\textsuperscript{**} & -8.511\textsuperscript{***} & -8.500\textsuperscript{***} \\
        XLU & -9.122 & -9.057\textsuperscript{***} & -9.097\textsuperscript{***} & -7.974\textsuperscript{***} & -9.017\textsuperscript{***} & -8.990\textsuperscript{***} & -9.100\textsuperscript{***} & -9.117\textsuperscript{***} & -9.119\textsuperscript{***} \\

        \bottomrule
    \end{tabular}}
     	\centering
	\begin{tablenotes}
		\item Note: ***, **, and * indicate rejection of the null hypothesis against TIP-PCA at the 1\%, 5\%, and 10\% significance levels, respectively, based on the Diebold-Mariano (DM) test.
	\end{tablenotes}    
\end{table}

Table \ref{MSPE and QLIKE} reports the results of MSPEs and QLIKEs. 
From Table \ref{MSPE and QLIKE}, we find that the TIP-PCA method exhibits the best performance overall. 
This may be  because the projection method, utilizing covariates such as ex-post realized volatility information and the U-shaped intraday volatility feature, contributes to enhancing the accuracy of instantaneous volatility predictions.
We note that PC also performs well compared to other competing methods.
This may be because removing the noise component using the PC method is crucial.  
Additionally, the dynamics of previous intraday volatility significantly influence one-day-ahead intraday volatility forecasts, which indicates a strong AR structure in the interday dimension.
TIP-PCA-S is also comparable to the proposed TIP-PCA in some cases. 
This may be because both TIP-PCA and TIP-PCA-S leverage the same information captured by the HAR model structure within the instantaneous volatility matrix.

We also evaluated the performance of the proposed method in estimating one-day-ahead 10-minute frequency Value at Risk (VaR).
In particular, we first predicted the one-day-ahead conditional expected instantaneous volatilities using the TIP-PCA, AVE, AR, SARIMA, HAR, HAR-D, XGBoost, and PC procedures using the in-sample period data.
We then calculated the quantiles using historical standardized 10-minute returns. 
Specifically, we standardized in-sample 10-minute returns using estimated conditional instantaneous volatilities. 
We then derived sample quantiles for 0.01, 0.02, 0.05, 0.1, and 0.2. 
Using the sample quantile estimates and predicted instantaneous volatility, we obtained the one-day-ahead 10-minute frequency VaR values for each prediction method. 
We used a fixed in-sample period as one quarter and implemented a rolling window scheme.
The out-of-sample period was considered to be from October 2021 to June 2022.

\begin{table}[t]
    \centering
    \caption{Number of cases where the adjusted $p$-value is greater than 0.05 for TIP-PCA, AVE, AR, SARIMA, HAR, HAR-D, XGBoost, PC, and TIP-PCA-S across 12 ETFs at each $q_0 = \{0.01, 0.02, 0.05, 0.1, 0.2\}$ based on the LRuc, LRcc, and DQ tests.}\label{VaR_results}    
    \scalebox{0.87}{
    \begin{tabular}{lrrrrr|rrrrr|rrrrr}
        \toprule
        & \multicolumn{5}{c}{LRuc} & \multicolumn{5}{c}{LRcc} & \multicolumn{5}{c}{DQ} \\
        \cmidrule(lr){2-6} \cmidrule(lr){7-11} \cmidrule(lr){12-16}
        $q_{0}$& 0.01 & 0.02 & 0.05 & 0.1 & 0.2 & 0.01 & 0.02 & 0.05 & 0.1 & 0.2 & 0.01 & 0.02 & 0.05 & 0.1 & 0.2\\
        \midrule
        TIP-PCA	&	9	&	12	&	12	&	12	&	12	&	10	&	12	&	12	&	12	&	12	&	6	&	6	&	7	&	8	&	10	\\
        AVE	&	3	&	3	&	6	&	12	&	12	&	3	&	4	&	5	&	12	&	12	&	3	&	3	&	4	&	4	&	7	\\
        AR	&	4	&	5	&	11	&	12	&	12	&	7	&	7	&	11	&	12	&	12	&	4	&	4	&	5	&	7	&	9	\\
        SARIMA	&	1	&	1	&	1	&	3	&	10	&	7	&	9	&	11	&	12	&	12	&	2	&	2	&	2	&	3	&	6	\\
        HAR	&	1	&	2	&	4	&	12	&	12	&	0	&	1	&	4	&	11	&	12	&	3	&	4	&	4	&	5	&	7	\\
        HAR-D	&	1	&	0	&	4	&	7	&	12	&	7	&	10	&	12	&	12	&	12	&	2	&	2	&	3	&	4	&	6	\\
        XGBoost	&	3	&	4	&	8	&	11	&	12	&	7	&	10	&	11	&	12	&	12	&	3	&	3	&	5	&	6	&	9	\\
        PC	&	6	&	8	&	12	&	12	&	12	&	7	&	9	&	12	&	12	&	12	&	5	&	5	&	7	&	8	&	10	\\
        TIP-PCA-S	&	4	&	6	&	11	&	12	&	12	&	8	&	10	&	12	&	12	&	12	&	4	&	5	&	6	&	8	&	10	\\
        \bottomrule
        \end{tabular}}
	\begin{tablenotes}
		\item Note: The adjusted $p$-values are based on the Benjamini–Hochberg (BH) procedure \citep{benjamini1995controlling} at a level $\alpha = 0.05$.
	\end{tablenotes}        
\end{table}

To backtest the estimated VaR, we conducted the likelihood ratio unconditional coverage (LRuc) test \citep{kupiec1995techniques}, the likelihood ratio conditional coverage (LRcc) test \citep{christoffersen1998evaluating}, and the dynamic quantile (DQ) test with lag 4 \citep{engle2004caviar}.
Table \ref{VaR_results} reports the number of cases where the adjusted $p$-value using the BH procedure is greater than 0.05 for the 12 ETFs at each $q_0 = \{0.01, 0.02, 0.05, 0.1,0.2\}$ quantile, based on the LRuc, LRcc, and DQ tests.
In Table \ref{VaR_results}, we find that the TIP-PCA method consistently outperforms in all hypothesis tests.
We note that both PC and TIP-PCA-S also perform well compared to other methods. 
However, the number of cases for TIP-PCA exceeds those of PC and TIP-PCA-S, particularly for the lower quantiles, which are difficult to predict.
This is because TIP-PCA-S cannot filter out the noise component using the given singular vector structure, and PC cannot explain the complex dynamic structure. 
In contrast, TIP-PCA can filter out the noise component through the projection approach and catch the complex dynamics using the inter-day HAR  and intra-day U-shape structures.
This outcome confirms that the proposed TIP-PCA method significantly contributes to the improved prediction accuracy of future instantaneous volatilities and enhanced risk management.

\section{Conclusion} \label{conclusion}
This paper introduces a novel intraday instantaneous volatility prediction procedure.
The proposed Two-sIde-Projected-PCA (TIP-PCA) method leverages both interday and intraday volatility dynamics based on the semiparametric structure of the low-rank matrix of the instantaneous volatility process. 
We establish the asymptotic properties of TIP-PCA and its future instantaneous volatility estimators.
In the empirical study, concerning the out-of-sample performance of predicting the one-day-ahead instantaneous volatility process, TIP-PCA outperforms other conventional methods. 
This finding confirms that both the HAR model structure on interday dynamics and the U-shaped pattern on intraday dynamics contribute to predicting the future instantaneous volatility process.

In this paper, we focus on the instantaneous volatility process for a single asset. 
In practice, we often need to handle a large number of assets. 
Thus, it is important and interesting to extend the study to predict the instantaneous volatility process of many assets.
However, to do this, cross-sectionally, we encounter another curse of dimensionality. 
Therefore, technically, it is a demanding task to handle both the cross-sectional curse of dimensionality and the intraday curse of dimensionality.
We leave this for a future study.

\bibliography{myReferences}

	\end{document}